\documentclass[runningheads]{llncs}
\usepackage[T1]{fontenc}
\usepackage{graphicx}
\usepackage{color}
\usepackage[nolist,printonlyused]{acronym}
\usepackage{amsmath, amssymb, amsfonts}

\allowdisplaybreaks
\usepackage{amsthm}
\usepackage{algpseudocode}
\usepackage{balance}
\usepackage{bm}
\usepackage[english]{datetime2}
\usepackage{glossaries}
\usepackage{multicol}
\usepackage{nicefrac}
\usepackage{textcomp}
\usepackage{tikz}
\usepackage{xcolor}
\usepackage{textcomp}

\usepackage[ruled,vlined]{algorithm2e}

\newcommand{\algcmt}[1]{\tcp*[r]{#1}}   
\SetAlgoInsideSkip{0pt}

\usepackage{sidecap}
\sidecaptionvpos{figure}{t}

\DeclareMathOperator*{\argmin}{\arg\!\min}

\usepackage[parse-numbers=false]{siunitx}

\usepackage[font=footnotesize]{subcaption}

\usepackage{hyperref}
\usepackage{cleveref}

\newtheorem{assumption}{Assumption}

\crefname{figure}{Fig.}{Figs.}
\crefname{equation}{}{}

\crefname{table}{Table}{Tables}
\crefname{algorithm}{Alg.}{Algorithms}
\crefname{section}{Section}{Sections}
\crefname{remark}{Remark}{Remarks}
\def\BibTeX{{\rm B\kern-.05em{\sc i\kern-.025em b}\kern-.08em
    T\kern-.1667em\lower.7ex\hbox{E}\kern-.125emX}}

\title{\LARGE \bf 
Efficiently Solving Mixed-Hierarchy Games with Quasi-Policy Approximations 
}

\titlerunning{Mixed-Hierarchy Games with Quasi-Policy Approximations}
\authorrunning{H.~I.~Khan et al.}

\author{
Hamzah I. Khan\inst{1*}
\orcidID{0000-0001-5481-6388}
\and
Dong Ho Lee\inst{1}
\orcidID{0000-0002-9045-2574}
\and
Jingqi Li\inst{1} 
\orcidID{0000-0002-3731-3807}
\and
Tianyu Qiu\inst{1}
\orcidID{0009-0008-5189-3722}
\and
Christian Ellis\inst{1,2} 
\orcidID{0000-0002-0544-8627}
\and
Jesse Milzman\inst{2} 
\orcidID{0000-0003-4937-8912}
\and
Wesley Suttle\inst{2} 
\orcidID{0000-0003-1234-7151} 
\and
David Fridovich-Keil\inst{1}
\orcidID{0000-0002-5866-6441}
}

\institute{
University of Texas at Austin, Austin, TX 78712, USA\\
\email{Corresponding Author\inst{*}: hamzah@utexas.edu}
\and
U.S. Army Research Laboratory, Adelphi, MD 20783, USA
}

\crefname{figure}{Fig.}{Figs.}
\crefname{equation}{}{}
\usepackage{comment}
\crefname{table}{Table}{Tables}
\crefname{algorithm}{Alg.}{Algorithms}
\crefname{section}{Section}{Sections}
\crefname{remark}{Remark}{Remarks}
\def\BibTeX{{\rm B\kern-.05em{\sc i\kern-.025em b}\kern-.08em
    T\kern-.1667em\lower.7ex\hbox{E}\kern-.125emX}}

\newcommand\extrafootertext[1]{%
    \bgroup
    \renewcommand\thefootnote{\fnsymbol{footnote}}%
    \renewcommand\thempfootnote{\fnsymbol{mpfootnote}}%
    \footnotetext[0]{#1}%
    \egroup
}

\newcommand\MethodName{Mixed Hierarchy Game}

\newcommand\edits[1]{\textcolor{blue}{#1}}
\newcommand\newedit[1]{\textcolor{purple}{#1}}

\newcommand\cmt[1]{\textcolor{purple}{#1}}
\newcommand\todo[1]{\textcolor{red}{TODO: #1}}

\newcommand\david[1]{\textcolor{orange}{[DFK: #1]}}
\newcommand\hmzh[1]{\textcolor{purple}{[HK: #1]}}
\newcommand\jingqi[1]{\textcolor{blue}{[JL: #1]}}

\renewcommand\todo[1]{T_}
\renewcommand\hmzh[1]{H_}
\renewcommand\david[1]{D_}
\renewcommand\jingqi[1]{J_}
\renewcommand\newedit[1]{N_}
\renewcommand{\cmt}[1]{#1}
\renewcommand\edits[1]{#1}

\newcommand{\T}{\top}

\newcommand{\decVarSymbol}{z}
\newcommand\decVar[1]{\mathbf{\decVarSymbol{}}^{#1}}
\newcommand\allDecVars{\decVar{}}

\newcommand\numstates[1]{n_x^{#1}}
\newcommand\numctrls[1]{n_u^{#1}}
\newcommand\numDecVars[1]{n_z^{#1}}

\newcommand\numtotalvars[1]{n_s^{#1}}

\newcommand\initstate[1]{\mathbf{x}_{\textrm{init}}^{#1}}
\newcommand\state[2]{\mathbf{x}^{#1}_{#2}}

\newcommand\ctrl[2]{\mathbf{u}^{#1}_{#2}}

\newcommand\totalvar[1]{\bm{s}^{#1}}
\newcommand\totalvariter[2]{\bm{s}^{#1}_{#2}}

\newcommand\outvar[1]{\mathbf{w}^{#1}}

\newcommand\equalityconstraint[1]{g^{#1}}

\newcommand{\constraintDualSym}{\lambda{}}
\newcommand{\constraintDual}[1]{\constraintDualSym{}^{#1}}

\newcommand{\policyDualSym}{\mu{}}
\newcommand{\policyDual}[2]{\policyDualSym{}^{#1,#2}}
\newcommand{\allFollowerPolicyDuals}[1]{\bm{\policyDualSym{}}^{#1}}

\newcommand\lagrangian[1]{\mathcal{L}^{#1}}

\newcommand\numpolicyduals[1]{n_{\mu}^{#1}}
\newcommand\numequalitylagrange[1]{n_g^{#1}}

\newcommand{\numplayers}{N}
\newcommand{\numrobots}{N}

\newcommand\idxi{i}
\newcommand\idxj{j}
\newcommand\idxk{k}
\newcommand\iter{k}

\newcommand\horizon{T}
\newcommand\dt{\Delta t}

\newcommand\dynamics[1]{f^{#1}}
\newcommand\costfn[1]{J^{#1}}

\newcommand\allpolicies[1]{\mathrm{\Phi}^{#1}}
\newcommand\policyfn[2]{\mathrm{\phi}^{#2}}

\newcommand\approxkktConds[1]{\pi^{#1}}

\newcommand\kktConds[1]{\pi^{#1}}

\newcommand\kktGuess[1]{b^{\idxi{}}}

\newcommand{\stepsize}[1]{\alpha_{#1}}

\newcommand\graph{\mathcal{G}}
\newcommand\edgeset{\mathcal{E}}
\newcommand\Nset{[\numplayers{}]}

\newcommand\tree[1]{\mathcal{T}_{#1}}

\newcommand\leader{\mathrm{L}}
\newcommand\follower{\mathrm{F}}
\newcommand\nash{\mathrm{N}}
\newcommand\nashset[1]{\boldsymbol{\nash}(#1)}
\newcommand\leaderset[1]{\leader(#1)}
\newcommand\allleaderset[1]{\boldsymbol{\leader}(#1)}
\newcommand\followerset[1]{\follower(#1)}
\newcommand\allfollowerset[1]{\boldsymbol{\follower{}}(#1)}

\newcommand\optimalresponseset[2]{\mathcal{R}^{#1*}_{\follower{}}(#2)}

\newcommand\arbitraryint{d}
\newcommand\arbitraryset{\mathcal{S}}
\newcommand\arbitrarysetelems[1]{e^{#1}}
\newcommand\arbitraryindexset{E}

\newcommand{\threshold}{\tau}
\newcommand{\maxiters}{K_{\max}}

\begin{document}

\maketitle

\begin{abstract}
    Multi-robot coordination often exhibits hierarchical structure, with some robots' decisions depending on the planned behaviors of others. While game theory provides a principled framework for such interactions, existing solvers struggle to handle mixed information structures that combine simultaneous (Nash) and hierarchical (Stackelberg) decision-making. 
We study $\numplayers{}$-robot forest-structured mixed-hierarchy games, in which each robot acts as a Stackelberg leader over its subtree while robots in different branches interact via Nash equilibria. 
We derive the Karush–Kuhn–Tucker (KKT) first-order optimality conditions for this class of games and show that they involve increasingly high-order derivatives of robots’ best-response policies as the hierarchy depth grows, rendering a direct solution intractable. 
To overcome this challenge, we introduce a quasi-policy approximation that removes higher-order policy derivatives and develop an inexact Newton method for efficiently solving the resulting approximated KKT systems. 
We prove local exponential convergence of the proposed algorithm for games with nonquadratic objectives and nonlinear constraints. 
The approach is implemented in a highly optimized Julia library (\texttt{MixedHierarchyGames.jl}\extrafootertext{Code available: \url{https://github.com/CLeARoboticsLab/MixedHierarchyGames.jl}.}) and evaluated \cmt{in hardware and} simulated multi-agent experiments,
demonstrating real-time convergence for complex mixed-hierarchy information structures.
\extrafootertext{This work was sponsored by the Army Research Laboratory under Cooperative Agreement Number W911NF-25-2-0021, and by the National Science Foundation under Grants 2211548 and 2336840.}
\keywords{game theory, information structure, planning, optimization}
\end{abstract}

\section{Introduction}
\label{sec:intro}


Strategic multi-robot decision-making often involves hierarchical information structures that define leader and follower roles among different robots, each of which makes a constrained decision to optimize a distinct objective with access to different information.
Mathematical games, particularly the Nash and Stackelberg equilibrium concepts, provide a powerful framework to model such strategic interactions.
For example, consider a convoy of three vehicles driving in a single lane.
In such a scenario, we may model each vehicle as a leader of each subsequent vehicle in the lane, resulting in a leadership ``chain'' that prior works model within the formalism of Stackelberg games \cite{basar1981trilevel,shafiei2024trilevel,li2024computation}.

However, many scenarios require modeling a  
mix of Nash and Stackelberg relationships between groups of robots \cite{li2025iesep,mohapatra2025feedback}.
\cmt{Beyond convoy merging, mixed-hierarchy structures arise in multi-robot patrolling, social navigation, and human-robot teams, where designated leaders influence subgroups whose members interact as peers.}
Consider a fourth car seeking to merge into the convoy's lane between the first and second convoy vehicles, as in \cref{fig:merging-experiments}.
Merging into this position in the convoy requires the fourth car to coordinate with both other vehicles.
First, the merging vehicle must respond to signaled behavior from the first car, such as slowing down for other vehicles or speeding up to make space.
The same type of signaling does not necessarily exist between the merging robot and the second car in the convoy, making the Nash equilibrium model of simultaneous play a natural model for their interaction.

\begin{figure*}[!t]
\centering
\includegraphics[width=0.8\textwidth]{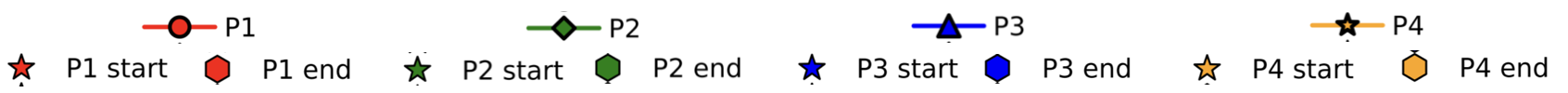}
\begin{subfigure}{0.24\textwidth}
    \centering
    \includegraphics[width=\textwidth]{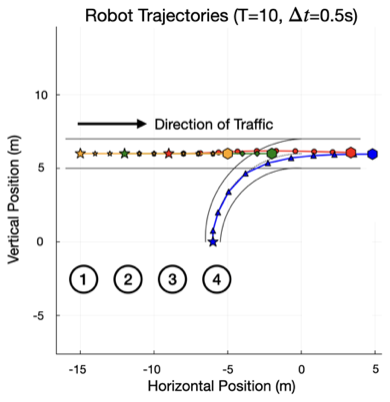}
    \caption{Nash hierarchy structure.}
    \label{subfig:merging-nash}
\end{subfigure}
\hfill
\begin{subfigure}{0.24\textwidth}
    \centering
    \includegraphics[width=\textwidth]{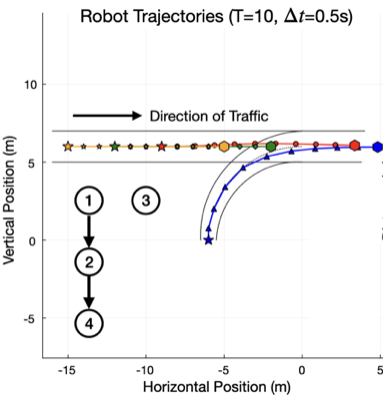}
    \caption{Mixed hierarchy structure A.}
    \label{subfig:merging-mixed-a}
\end{subfigure}
\hfill
\begin{subfigure}{0.24\textwidth}
    \centering
    \includegraphics[width=\textwidth]{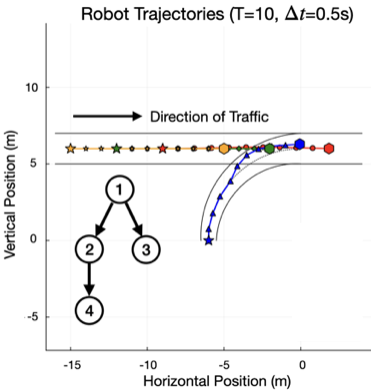}
    \caption{Mixed hierarchy structure B.}
    \label{subfig:merging-mixed-b}
\end{subfigure}
\hfill
\begin{subfigure}{0.24\textwidth}
    \centering
    \includegraphics[width=\textwidth]{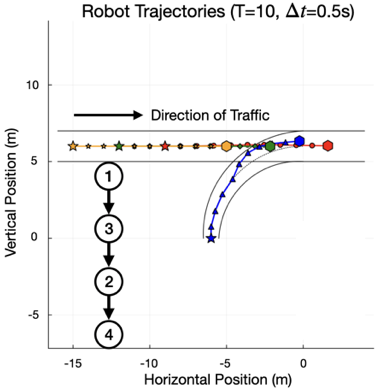}
    \caption{Stackelberg chain structure.}
    \label{subfig:merging-stack-chain}
\end{subfigure}

\vspace{-6pt}
\caption{
\label{fig:merging-experiments}
We experiment on four hierarchy structures on a merging scenario and report positions over time.
\cmt{In \cref{subfig:merging-nash}, vehicles play a pure Nash game.
In \cref{subfig:merging-mixed-a}, vehicle 1 leads vehicles 2 and 4, and vehicle 2 leads vehicle 4.
In \cref{subfig:merging-mixed-b}, vehicle 1 leads every other agent and vehicle 2 leads vehicle 4.
Vehicles 3 and (2,4) are negotiating the merge simultaneously.
In \cref{subfig:merging-stack-chain}, vehicles play a Stackelberg chain with hierarchy $1\to3\to2\to4$.
In the absence of leadership from vehicle 1 to vehicle 3, vehicle 3 merges first (\cref{subfig:merging-nash,subfig:merging-mixed-a}). When that leadership exists, vehicle 3 merges into the convoy after vehicle 1 (\cref{subfig:merging-mixed-b,subfig:merging-stack-chain}).}
}
\vspace{-18pt}
\end{figure*}

While some approaches model this type of mixed Nash-Stackelberg 
information structure \cite{laine2024mathematical,li2025iesep,mohapatra2025feedback}, they typically solve problems with linear constraints and quadratic costs, limiting their applicability to real-world robotic systems that feature nonlinear constraints (often arising from nonlinear motion dynamics) and nonquadratic objectives \cmt{(i.e., not restricted to quadratic polynomials in the decision variables)} that encode complex tasks.
To address this gap, we develop a more general solver that efficiently handles nonquadratic objectives and nonlinear equality constraints for a subclass of mixed-hierarchy games with leadership structures that are forests. This subclass of games includes Nash and Stackelberg games as special cases.

A key difficulty arises, however, in expressing the first-order optimality conditions for these mixed-hierarchy games: these conditions depend upon high-order derivatives of robots' best-response maps, or policies. 
To maintain tractability, we develop a ``quasi-policy'' approximation which extends recent work in the feedback Nash \cite{laine2023computation} and Stackelberg \cite{li2024computation} settings, wherein each robot's first-order optimality conditions are derived by ignoring higher-order derivatives of downstream followers' policies.
This approximation enables us to solve mixed-hierarchy games efficiently, with real-time computation demonstrating the value of the quasi-policy approximation in enabling real-world motion planning and robot coordination applications.
Ultimately, we claim three contributions. 
\begin{enumerate}
    \item We propose a principled approach for constructing approximate optimality conditions for a class of nonlinear, equality-constrained mixed-hierarchy games with forest-structured information patterns, including a novel method for embedding followers' best responses within leaders' optimality conditions as \emph{locally first-order} policy approximations (which we refer to as a quasi-policy approximation, following \cite{laine2023computation,li2024computation}).
    \item We introduce
    an efficient, flexible, and scalable algorithm for settings in which robots have arbitrary nonlinear, differentiable costs and \cmt{(equality)} constraints,
    that uses these optimality conditions to approximate local equilibria for arbitrary forest-structured mixed-hierarchy games. We provide a theoretical proof and an empirical study for the local exponential convergence of our algorithm to an approximate mixed-hierarchy equilibrium.
    \item 
    \edits{We provide an open-source, optimized Julia library for solving these problems, which supports simulations of the merging scenario in \cref{fig:merging-experiments} as well as in \cmt{hardware with}
    robots playing a target-guarding game in \cref{fig:front-figure}.} We demonstrate real-time
    motion planning  
    under complex hierarchical leadership structures.
\end{enumerate}

\section{Related Work}
\label{sec:related-work}
\subsubsection{Nash and Stackelberg Games.}
Classical Nash and Stackelberg equilibria capture mutual best responses \cite{nash2024noncooperative} and leader-follower interactions via sequential choices \cite{stackelberg1952theory}, respectively. 
Although closed-form expressions for the solutions to these problems are available in highly-structured cases---e.g., when agents' objectives are quadratic and only linear equality constraints are present \cite{bacsar1998dynamic}---more general settings typically have no clear analytic solution and must be approached computationally, from the perspective of mathematical programming and complementarity-based formulations \cite{fortuny1981representation,luo1996mathematical}. These formulations enable \emph{local} equilibrium computation for Nash and Stackelberg games with complex objectives and constraints \cite{fridovich2020efficient,laine2023computation,li2024computation}. 
However, they inherently assume simple information structures such as simultaneous interaction (Nash) or a direct leader-follower hierarchy (Stackelberg).
Extensions exist for multi-leader or multi-follower settings \cite{sherali1983stackelberg,kulkarni2014shared,mohapatra2025feedback}, but do not address nonlinear games under mixed Nash-Stackelberg leadership structures. Our work targets this gap.

\subsubsection{Computing Equilibria under Hierarchical Information Structures.}
Optimization problems which encode a follower's decision problem as a constraint upon the leader's decision are naturally bilevel in nature \cite{bracken1973mathematical,shi1981general}.
Moreover, as longer leader-follower chains arise, these hierarchical, multilevel problems' \cite{basar1981trilevel,candler1977multi} computational complexity grows rapidly  \cite{jeroslow1985polynomial,blair1992computational}.
Recent work explores monotone operator formulations \cite{shafiei2024trilevel}, gradient-based methods \cite{sato2021gradient}, and parametric techniques \cite{faisca2009multi} for these problems, but these approaches largely focus on low-dimensional settings. 
More expressive frameworks include quadratic constrained games under bilevel tree-structured hierarchies \cite{carvalho2024nash} and the mathematical programming network (MPN) framework \cite{laine2024mathematical}, which encodes arbitrary mixtures of Nash and Stackelberg relations via symbolic graphs. 
While the MPN framework fundamentally describes the same type of mixed Nash-Stackelberg hierarchies we consider in this work, existing MPN solvers are restricted to quadratic objectives and linear inequality constraints, and face scalability challenges due to polytope-based feasible-set representations and iterative active-set updates that must be repeatedly refined during a recursive graph traversal.
In contrast, our work solves the nonquadratic MPN problem in the nonlinear equality-constrained setting and supports tree-structured hierarchies of arbitrary depth.

\subsubsection{Stochastic Game Perspectives on Information Structure. }
A large body of work models information asymmetry and partial observability in sequential decision-making using stochastic and Markov games \cite{shapley1953stochastic,littman1994markov,filar2012competitive}. Within this framework, mixed Nash-Stackelberg and incomplete-information settings have been studied via regret-based and value-based learning \cite{zinkevich2007regret,chang2015value,farina2021model,kozuno2021learning,altabaa2024role}, as well as belief-state dynamic programming for partially observable stochastic games \cite{emery2004approximate,hansen2004dynamic,liu2022sample,becker2024bridging}.
Another line of work studies correlated equilibria under various information structures \cite{aumann1987correlated,brandenburger1992correlated,castiglioni2019leader}.
These approaches primarily focus on stochastic dynamics and discrete state-action spaces.
In contrast, we consider deterministic dynamic games with continuous actions under mixed Nash and Stackelberg information structures, and efficiently compute equilibria using differentiable mathematical programming rather than belief-space planning or model-free learning.

\section{Problem Formulation}
\label{sec:formulation}

\noindent\textbf{Set Notation.} For a positive integer $\arbitraryint{}$, let $[\arbitraryint{}] \!=\! \{1, 2, \ldots, \arbitraryint{}\}$. 
Let $\arbitraryset{}$ be an arbitrary finite set with cardinality denoted $|\arbitraryset{}|$.
For a set of indices $\arbitraryindexset{} \subseteq [|\arbitraryset{}|]$,
we define set-based element lookup $\arbitraryset{}^{\arbitraryindexset{}} = \{ \arbitrarysetelems{\idxi{}} ~|~ \idxi \in \arbitraryindexset{}, \arbitrarysetelems{\idxi{}} \in \arbitraryset{} \}$ and exclusion by $\arbitraryset{}^{-\arbitraryindexset{}} = \arbitraryset{} \setminus \arbitraryset{}^{\arbitraryindexset{}}$. 

\noindent\textbf{Decision Variables.} 
We consider $\numrobots{}$ interacting robots.
Each robot $\idxi{} \in [\numrobots{}]$ has trajectory $\decVar{\idxi{}} \in \mathbb{R}^{\numDecVars{\idxi{}}}$.
We denote the joint trajectories by $\allDecVars{} = ( \decVar{1}, \cdots, \decVar{\numrobots{}} )^\T \in \mathbb{R}^{\numDecVars{}}$, where $\numDecVars{} = \sum_{\idxi{}=1}^{\numrobots{}}  \numDecVars{\idxi{}}$.

\subsection{Mixed Hierarchy Games}
In an $\numrobots{}$-robot \emph{mixed hierarchy game}, each robot $\idxi{} \in \Nset{}$ minimizes an
objective function of the strategies $\decVar{}$ of all robots,
\begin{equation}
    \label{eq:objective}
    \costfn{\idxi{}}: \mathbb{R}^{\numDecVars{}} \to \mathbb{R},
\end{equation}
%
subject to a 
set of equality constraints 
which restrict its strategy, e.g. by enforcing dynamic feasibility:
\begin{equation}
    \label{eq:equality-constraints}
    0 = \equalityconstraint{\idxi{}}(\decVar{\idxi{}}) \in \mathbb{R}^{\numequalitylagrange{i}}.
\end{equation}
\begin{assumption}
We require \cref{eq:objective,eq:equality-constraints} to satisfy higher-order differentiability, i.e. $\costfn{\idxi}, \equalityconstraint{\idxi{}} \in C^\infty$.
\end{assumption}

Finally, we model the information structure (i.e., the set of hierarchical leader-follower relationships between robots)
as a directed acyclic graph $\graph{} = (\Nset{}, \edgeset{})$, i.e., \emph{without any directed cycles}. 
In $\graph{}$, each robot $\idxi{} \in \Nset{}$ corresponds to a node and $(\idxi{}, \idxj{}) \in \edgeset{}$ defines a direct \emph{leader-follower relationship} between leader $\idxi{}$ and follower $\idxj{}$; if $(\idxi{}, \idxj{})$ are connected by a longer path, we say that they have an \emph{indirect} leader-follower relationship.
Thus, a mixed-hierarchy game consists of graph $\graph$ and the set of all robots' objectives and equality constraints.


\subsection{Notation for Defining Mixed-Hierarchy Equilibria}
\label{ssec:notation-mixed-equilibria}
We will restrict our attention to graphs in which each node has at most one parent, i.e., each robot has at most one direct leader.
To this end, we introduce additional notation and an assumption on the topology of graph $\graph{{}}$. 
First, we define the sets of \emph{direct leaders} of a robot $\idxj{}$ and \emph{direct followers} of a robot $\idxi{}$ as 
\begin{equation}
    \label{eq:leader-follower-sets}
    \leaderset{\idxj{}} = \{ \idxi{} \mid (\idxi{}, \idxj{}) \in \graph{} \} 
    ~~~ \text{and} ~~~
    \followerset{\idxi{}} = \{ \idxj{} \mid (\idxi{}, \idxj{}) \in \graph{} \}.
\end{equation}
Next, we define the direct leaders of a set of robot indices $\arbitraryset{}$ as the union of the direct leaders of each robot, $\leaderset{\arbitraryset{}} = \bigcup_{\arbitrarysetelems{} \in \arbitraryset{}}\leaderset{\arbitrarysetelems{}}$.
We define the direct followers of $\arbitraryset{}$ similarly, $\followerset{\arbitraryset{}} = \bigcup_{\arbitrarysetelems{} \in \arbitraryset{}}\followerset{\arbitrarysetelems{}}$.
Finally, we recursively define their transitive closures, i.e., the sets of all leaders and followers (whether direct or indirect) of robot $\idxi{}$ to be
\begin{equation}
    \allleaderset{\idxi{}} = \leaderset{\idxi{}} \cup \leaderset{\leaderset{\idxi{}}} \cup \cdots ~\text{and}~
    \allfollowerset{\idxi{}} = \follower(\idxi{}) \cup \followerset{\followerset{\idxi{}}} \cup \cdots.
\end{equation}
Since $\graph{}$ is finite and acyclic, then these sets are finite too.
Finally, we define a set corresponding to the indices of robots with no other relationship to robot $\idxi{}$,
\begin{equation}
\nashset{\idxi{}} = \Nset{} \setminus \allleaderset{\idxi{}}{} \setminus \allfollowerset{\idxi{}}{} \setminus \{\idxi{}\}.
\end{equation}
These correspond to robots that have Nash relationships with robot $\idxi{}$, as we discuss in further detail below.

To make the problem tractable and ensure a clear leadership hierarchy, we restrict the topology of the information structure to one made of a union of a finite number of disconnected trees, i.e., a forest.
Such a topology remains versatile and is able to model a wide variety of hierarchical structures of interest, including pure Nash (for which $\edgeset{} = \{\}$) and Stackelberg chains (for which $\edgeset{} = \{ (\idxi{}, \idxi{}+1) ~|~ \idxi{} \in [\numrobots{}-1] \}$).
\begin{assumption}[The Hierarchical Information Structure Graph $\graph{}$ is a Forest]
\label{ass:forest-info-structure}
We assume that every robot $\idxi{} \in \Nset{}$ has either zero or one direct leader, i.e. $|\leaderset{\idxi{}}| \in \{0, 1\}$ (though it can have multiple indirect leaders). 
This condition ensures that every connected component of $\graph{}$ is a tree (i.e. $\graph{}$ is a forest) and that every node $\idxi{} \in \Nset{}$ is the root of a subtree, and therefore a leader to every robot in that subtree $\tree{\idxi{}} = (\{\idxi{}\} \cup \allfollowerset{\idxi{}}, \{ (\idxj{}, \idxk{}) \in \edgeset{} ~|~ \idxj{}, \idxk{} \in \{\idxi{}\} \cup \allfollowerset{\idxi{}} \})$.
We provide examples of (in)valid hierarchy structures in \cref{fig:graph-definitions,fig:invalid-graphs}~.
\end{assumption}

\begin{figure*}[!t]
\centering
\hfill
\begin{minipage}{0.29\textwidth}
    \centering
\begin{tabular}{c|cccc}
       $\idxi{}$ & \textbf{1}           & \textbf{2}         & \textbf{3}         & \textbf{4}        \\
\hline
$\leaderset{\idxi{}}$       
        &--         & 1     & 1                  & 2                 \\
$\allleaderset{\idxi{}}$   
        &--         & 1                  & 1                  & 1,2               \\
$\followerset{\idxi{}}$        
        & 2,3                  & 4      & --      & --     \\
$\allfollowerset{\idxi{}}$  
        & 2,3,4                & 4                  & --      & --     \\
$\nashset{\idxi{}}$        
        & --        & 3                  & 2,4          & 3$$           \\
\end{tabular}
\end{minipage}
\hfill
\begin{minipage}{0.29\textwidth}
    \centering

















\begin{tikzpicture}[>=stealth]

\node[circle, draw, minimum size=8mm] (1) at (0,2) {1};

\node[circle, draw, minimum size=8mm] (2) at (-1.2,0.5) {2};
\node[circle, draw, minimum size=8mm] (3) at ( 1.2,0.5) {3};

\node[circle, draw, minimum size=8mm] (4) at (-1.2,-0.8) {4};

\draw (-1.8,-1.4) rectangle (-0.6,1.1);

\draw (0.6,-0.1) rectangle (1.8,1.1);

\draw[dashed] (-0.6,0.5) -- node[below]{\small Nash} (0.6,0.5);

\draw[->] (1) -- (2);
\draw[->] (1) -- (3);
\draw[->] (2) -- (4);

\end{tikzpicture}
\end{minipage}
\hfill
\begin{minipage}{0.4\textwidth}
    \centering
    \scalebox{0.8}{
    \begin{minipage}{0.4\textwidth}
    \input{figs/snippet_running_example_optimization_problem}
    \end{minipage}
    }
\end{minipage}
\vspace{-10pt}
\caption{\label{fig:graph-definitions}
    We introduce three representations of a mixed-hierarchy game which models the information structure of the example from the introduction and satisfies \Cref{ass:forest-info-structure}:
    We include a set-based representation which lists elements of each robot's leader/follower/Nash sets
     (left), a graphical representation of $\graph{}$ (middle), and a coupled optimization problem (right).
    In this game, robot 1 leads all other robots (though robot 4 is led indirectly), robot 2 leads robot 4, and robot 2's subtree (including robot 4) is in a Nash relationship with robot 3.
}
\vspace{-6pt}
\end{figure*}
\begin{figure}[ht]
\centering

\begin{subfigure}{0.45\columnwidth}
\centering
\begin{tikzpicture}[>=stealth]

\node[circle, draw, minimum size=8mm] (1) at (0,1.2) {1};
\node[circle, draw, minimum size=8mm] (2) at (-1,0) {2};
\node[circle, draw, minimum size=8mm] (3) at (1,0) {3};

\draw[->] (1) -- (2);
\draw[->] (3) -- (1);
\draw[->] (2) -- (3);

\end{tikzpicture}
\caption{Hierarchy structure graph can not contain a directed cycle.}
\end{subfigure}
\hfill
\begin{subfigure}{0.45\columnwidth}
\centering
\begin{tikzpicture}[>=stealth]

\node[circle, draw, minimum size=8mm] (1) at (0,0.3) {1};
\node[circle, draw, minimum size=8mm] (2) at (1.8,0.3) {2};
\node[circle, draw, minimum size=8mm] (3) at (0.9,-0.9) {3};

\draw (-0.6,-0.3) rectangle (2.4,0.9);

\draw[->] (1) -- (3);
\draw[->] (2) -- (3);

\draw[dashed] (0.4,0.3) -- node[above]{\small Nash} (1.4,0.3);

\end{tikzpicture}
\caption{Hierarchy structure is sound but violates \Cref{ass:forest-info-structure}.}
\end{subfigure}
\vspace{-6pt}
\caption{\label{fig:invalid-graphs}Examples of invalid hierarchical information structure graphs. 
}
\vspace{-2em}
\end{figure}




\subsection{Equilibrium Conditions for Mixed-Hierarchy Games}
\label{ssec:equilibrium-conditions-in-mhgs}
Next, we define the equilibrium conditions associated with the mixed-hierarchy information structure under \Cref{ass:forest-info-structure}.


In the two-robot setting, if a robot
is a Stackelberg leader, 
then the leader communicates its strategy to the follower, who uses that knowledge when making its own decision.
The leader anticipates the followers' best response and makes a decision to steer the follower's decision to a desired outcome. 
Mathematically, this type of relationship is enforced by defining an optimal follower response map that can be embedded within the leader's problem, cf.
\cite[Ch. 3, 7]{bacsar1998dynamic}.

In an $\numrobots{}$-robot mixed hierarchy game, when robot $\idxi{}$ is an equality-constrained Stackelberg follower of (direct and indirect) leaders $\allleaderset{\idxi{}}$, the set of all optimal follower responses $\optimalresponseset{\idxi{}}{\decVar{\allleaderset{\idxi{}}}{}}$ given its leaders' communicated strategies $\decVar{\allleaderset{\idxi{}}}{}$ is 
\begin{subequations}\label{eq:stackelberg-optimal-response-set}
\begin{align}
&\hspace{-3em}\optimalresponseset{\idxi{}}{\decVar{\allleaderset{\idxi{}}}{}} \equiv 
\argmin_{\decVar{\idxi{}}{}, \decVar{\allfollowerset{\idxi{}}}} 
\costfn{\idxi{}}\!\left(
\decVar{\allleaderset{\idxi{}}}{},
\decVar{\idxi{}},
\decVar{\allfollowerset{\idxi{}}},
\decVar{\nashset{\idxi{}}}{} \right) \label{eq:optimal-response:objective} \\
\hspace{2.7em}&\text{s.t.}~~\equalityconstraint{\idxi{}}(\decVar{\idxi{}}) = 0, 
\label{eq:optimal-response:eq-constraint} \\
&\hspace{1.5em}
\left(\decVar{\idxj{}},\, \decVar{\allfollowerset{\idxj{}}}\right) \in
\optimalresponseset{\idxj{}}{\underbrace{\decVar{\idxi{}}, \decVar{\allleaderset{\idxi{}}}{}}_{\decVar{\allleaderset{\idxj{}}}}}, \quad\quad \forall j \in \followerset{i}.  \label{eq:optimal-response:optimal-subfollower-response} 
\end{align}
\end{subequations}
Eq. \cref{eq:optimal-response:objective} 
describes the minimization of robot $\idxi{}$'s objective function with respect to its own strategy and its followers' strategies (which it can influence).
Since each robot $\idxi{}$ is the leader of the robots in its own subtree $\tree{\idxi{}}$, the 
optimal follower response includes robot $\idxi{}$'s strategy $\decVar{\idxi{}*}$ and those of all of its followers, $\decVar{\allfollowerset{\idxi{}}*}$.
This best response optimization is subject to two constraints: first, the equality constraint \cref{eq:optimal-response:eq-constraint} corresponds to \cref{eq:equality-constraints}.
Second, \cref{eq:optimal-response:optimal-subfollower-response} accounts for followers' reactions to  robot $\idxi{}$'s choice of strategy, because the best responses affect robot $\idxi{}$'s objective.
Thus, \cref{eq:optimal-response:optimal-subfollower-response} recursively collects the optimal responses of robots within $\tree{\idxi{}}$ that make up $\decVar{\allfollowerset{\idxi{}}*}$.
We highlight that \cref{eq:optimal-response:objective} relies on Nash-related robots' trajectories $\decVar{\nashset{\idxi{}}}$ which robot $\idxi{}$ cannot directly influence.
Thus, each robot's equilibrium conditions must incorporate both Stackelberg leader conditions that embed \cref{eq:stackelberg-optimal-response-set} \emph{and} Nash conditions on the relationship between $\decVar{\idxi{}*}, \decVar{\allfollowerset{\idxi{}}*}$, and $\decVar{\nashset{\idxi{}}*}$.

Generally, a best response set $\optimalresponseset{\idxi{}}{\decVar{\allleaderset{\idxi{}}}{}}$ can contain multiple optimal combinations of values for $\decVar{\idxi{}*}{}$ and $ \decVar{\allfollowerset{\idxi{}}*}{}$, even in the simple two-robot quadratic case \cite[Ch. 7]{bacsar1998dynamic}.
Nevertheless, for ease of analysis, it is common to assume $|\optimalresponseset{\idxi{}}{\decVar{\allleaderset{\idxi{}}*}{}}| = 1$ \cite{bacsar1998dynamic,khan2024leadership}, i.e., that \emph{equilibrium} leader strategies result in a unique optimal response for all followers of robot $\idxi{}$.
We call a function that generates an element 
$\left(\decVar{\idxi{}*}, \decVar{\allfollowerset{\idxi{}}*} \right) \in \optimalresponseset{\idxi{}}{\decVar{\allleaderset{\idxi{}}}{}}$ a \emph{policy} $\allpolicies{\idxi{}}(\decVar{\allleaderset{\idxi{}}})$, where
\begin{equation}
\label{eq:unique-follower-policies}
 \left(\decVar{\idxi{}*}, \decVar{\allfollowerset{\idxi{}}*} \right) \equiv \allpolicies{\idxi{}}(\decVar{\allleaderset{\idxi{}}}{}) \quad  \text{and, at } \decVar{\allleaderset{\idxi{}}*}{}, \quad \optimalresponseset{\idxi{}}{\decVar{\allleaderset{\idxi{}}*}{}} = \left\{ \allpolicies{\idxi{}}(\decVar{\allleaderset{\idxi{}}*}{}) \right\}.
\end{equation}
We equivalently define $\allpolicies{\idxi{}}$ with component sub-policies $\decVar{\idxj{}*} \!=\! \policyfn{}{\idxj{}}(\decVar{\allleaderset{\idxj{}}})$ for each follower's individual best-responses, 
\begin{equation}
\label{eq:subpolicy-definition}
    \allpolicies{\idxi{}}(\decVar{\allleaderset{\idxi{}}}) := \left\{\decVar{\idxj{}*} = \policyfn{}{\idxj{}}(\decVar{\allleaderset{\idxi{}}}, \decVar{(\allleaderset{\idxj{}} \setminus \allleaderset{\idxi{}})*}) ~|~
    \idxj{} \in \{\idxi{}\} \cup \allfollowerset{\idxi{}}  \right\}.
\end{equation}
Only leader strategies $\decVar{\allleaderset{\idxi{}}}$ are provided to $\allpolicies{\idxi{}}$,
so the remaining leader strategies, $\decVar{(\allleaderset{\idxj{}} \setminus \allleaderset{\idxi{}})*}$, are computed from other subpolicies in \cref{eq:subpolicy-definition}.

\begin{definition}[Mixed-Hierarchy Equilibrium]
\label{def:mixed-hierarchy-equilibrium}
At a mixed-hierarchy equilibrium, each robot $\idxi{}$ must satisfy both Stackelberg leadership conditions to (direct and indirect) followers $\allfollowerset{\idxi{}}$ and Nash conditions with $\nashset{\idxi}$.
Given equilibrium leader strategies $\decVar{\allleaderset{\idxi{}}*}$ and equilibrium Nash-related robot strategies $\decVar{\nashset{\idxi{}}*}$, we introduce the following mixed-hierarchy equilibrium condition for each robot $\idxi{}$: 
\begin{subequations}\label{eq:mixed-hierarchy-equilibrium-condition}
\begin{align}
\left(\decVar{\idxi{}*}, \decVar{\allfollowerset{\idxi{}}*}\right) \!\in
&\argmin_{\decVar{\idxi{}}\!, \decVar{\allfollowerset{\idxi{}}}}
\costfn{\idxi{}}\!\left(
\decVar{\allleaderset{\idxi{}}*}, 
\decVar{\idxi{}}, 
\decVar{\allfollowerset{\idxi{}}},
\decVar{\nashset{\idxi{}}*}
\right) \label{eq:mixed-hierarchy-equilibrium-condition:objective}\\
\mathrm{s.t.}&~\equalityconstraint{\idxi{}}(\decVar{\idxi{}}) = 0, 
\label{eq:mixed-hierarchy-equilibrium-condition:eq-constraint} \\
\phantom{\text{s.t.}}&~
\decVar{\idxj{}}{}
\! = \! \policyfn{}{\idxj{}}(\decVar{\allleaderset{\idxi{}}*}\!\!, 
\decVar{\allleaderset{\idxj{}} 
\setminus \allleaderset{\idxi{}}}), \quad \forall \idxj{} \!\in\! \allfollowerset{\idxi{}}. \label{eq:mixed-hierarchy-equilibrium-condition:optimal-subfollower-response} 
\end{align}
\end{subequations}
\end{definition}

Condition \cref{eq:mixed-hierarchy-equilibrium-condition} incorporates both Stackelberg leadership and Nash conditions.
It requires each robot $\idxi{}$ to choose its optimal leader strategy to guide its followers $\allfollowerset{\idxi{}}$ to benefit the leading robot, as enforced by \cref{eq:mixed-hierarchy-equilibrium-condition:objective} and \cref{eq:mixed-hierarchy-equilibrium-condition:optimal-subfollower-response}. 
Simultaneously, the optimal strategies of robot $\idxi{}$ and its followers, 
$\decVar{(\{\idxi\} \cup \allfollowerset{\idxi})*}$, must be unilaterally optimal with respect to the equilibrium strategies of Nash-related robots, $\decVar{\nashset{\idxi{}}*}$, as enforced by \cref{eq:mixed-hierarchy-equilibrium-condition:objective}.
When condition 
\cref{eq:mixed-hierarchy-equilibrium-condition} 
holds for all robots $\idxi{} \in \Nset{}$ within some neighborhood of a point in the joint strategy space for all robots, $\decVar{*}{}$, then we say that $\decVar{*}{}$ is a \emph{local (equality-constrained) mixed-hierarchy equilibrium}.
\begin{assumption}[Linear-Independence Constraint Qualification]
To ensure that necessary KKT conditions hold, we require the linear independence constraint qualification (LICQ \cmt{is a standard regularity condition for these problems}), cf. \cite[Def. 12.4]{nocedal1999numerical}, to hold in \cref{eq:mixed-hierarchy-equilibrium-condition}, for each robot $\idxi{}$. 
\end{assumption}
We note that conditions \cref{eq:mixed-hierarchy-equilibrium-condition}, when aggregated across all robots, define an equality-constrained variant of the mathematical programming network problem introduced by \cite{laine2024mathematical}, under \Cref{ass:forest-info-structure}.

\newcommand{\constvelocity}[1]{v^{#1}_c}
\newcommand{\dsafe}{d_{\mathrm{s}}}
\newcommand{\colsensit}{c_\mathrm{s}}

\subsection{Running Example: $\numplayers{}$-Robot Convoy Merging Game}
\label{sec:running-example}

To make the problem formulation more concrete, consider the following  example of a game in which $\numplayers{}$ robots each select trajectories in a convoy merging scenario, taking place over
finite horizon $\horizon{} \in \mathbb{N} \setminus \{ 0 \}$, and with time discretization $\dt{} > 0$. 
At each time step $t \in [\horizon{}]$, each robot $\idxi{} \in [\numrobots{}]$ has 
state vector $\state{\idxi{}}{t} \in \mathbb{R}^{\numstates{\idxi{}}}$ and
control vector $\ctrl{\idxi{}}{t} \in \mathbb{R}^{\numctrls{\idxi{}}}$.
Concretely, each vehicle $\idxi{}$'s state contains planar horizontal and vertical position $\bm{p}^{\idxi}_t = (p^{\idxi{}}_{x,t}, ~ p^{\idxi{}}_{y,t})^\T \in \mathbb{R}^2$, heading $\psi^{\idxi{}}_t \in \mathbb{R}$, and scalar velocity $v^{\idxi{}}_t \in \mathbb{R}$, and vehicle $\idxi{}$'s controls contain longitudinal acceleration $a^{\idxi{}}_t \in \mathbb{R}$ and rotational velocity $\omega^{\idxi{}}_t \in \mathbb{R}$, i.e. $\state{\idxi{}}{t} = ( p^{\idxi{}}_{x,t}, p^{\idxi{}}_{y,t}, \psi^{\idxi{}}_{t}, v^{\idxi{}}_t )^\T, \ctrl{\idxi{}}{t} = (a^{\idxi{}}_t, \omega^{\idxi{}}_t )^\T$.
Each robot $\idxi{}$'s strategy takes the form 
\begin{equation}
\label{eq:dynamic-game-trajectory}
\decVar{\idxi{}}{} = (\state{\idxi{}\T}{1},~ \ctrl{\idxi{}\T}{1},~ \cdots~,~ \state{\idxi{}\T}{\horizon{}},~ \ctrl{\idxi{}\T}{\horizon{}})^{\T}.
\end{equation}
Robot $\idxi$'s trajectory $\decVar{\idxi}$ is constrained to start at an initial state $\initstate{\idxi{}}$ and for its state to evolve in time according to robot dynamics $\dynamics{\idxi{}}$, defined such that
\begin{equation}
    \label{eq:dynamics}
    \state{\idxi{}}{t+1} = \dynamics{\idxi{}}(\state{\idxi{}}{t}, \ctrl{\idxi{}}{t}), 
    \qquad t \in [\horizon{}-1].
\end{equation}
Concretely, for this example, we take these to encode unicycle dynamics, i.e.
\begin{equation}
\label{eq:unicycle-dynamics}
\state{\idxi{}}{t+1} = 
\dynamics{\idxi{}}(\state{\idxi{}}{t}, \ctrl{\idxi{}}{t}) 
= \left[ \begin{array}{c}
    p^{\idxi{}}_{x,t+1} \\
    p^{\idxi{}}_{y,t+1} \\
    \psi^{\idxi{}}_{t+1} \\
    v^{\idxi{}}_{t+1} \\
\end{array} \right]
= \left[ \begin{array}{c}
    p^{\idxi{}}_{x,t} + \Delta t v^{\idxi{}}_{t} \cos\psi^{\idxi{}}_t \\
    p^{\idxi{}}_{y,t} + \Delta t v^{\idxi{}}_{t} \sin\psi^{\idxi{}}_t \\
    \psi^{\idxi{}}_{t} + \Delta t \omega^{\idxi{}}_{t} \\
    v^{\idxi{}}_{t} + \Delta t a^{\idxi{}}_{t} \\
\end{array} \right].
\end{equation}
Ultimately, we can then transcribe each robot's initial state and dynamics constraints as
\begin{equation}
\label{eq:running-example-constraint}
\equalityconstraint{\idxi{}}(\decVar{\idxi{}}) = \left[ \begin{array}{c} \initstate{\idxi{}} - \state{\idxi{}}{1} \\ 
\state{\idxi{}}{2} - \dynamics{\idxi{}}({\state{\idxi{}}{1}, \ctrl{\idxi{}}{1}}) \\
\vdots \\
\state{\idxi{}}{\horizon{}} - \dynamics{\idxi{}}({\state{\idxi{}}{\horizon{}-1}, \ctrl{\idxi{}}{\horizon{}-1}})
\end{array} \right] = 0.
\end{equation}
Additional equality constraints---e.g., encoding a desired terminal state---can be appended to \cref{eq:running-example-constraint} as needed.

Lastly, we construct convoy merging scenarios in which  $\numrobots{}_c$ vehicles  have formed a convoy on the road, and $\numrobots{}_m = \numrobots{} - \numrobots{}_c$ vehicles wish to merge.
All vehicles wish to track the center of the merged lane (with their vertical positions and headings), maintain a consistent speed $\constvelocity{\idxi{}}$,
minimize their control effort, and avoid collision.
The objective, given below, is parameterized by positive weights $\bm{w}^\idxi{} = (w^{\idxi, 1}, w^{\idxi, 2}, w^{\idxi, 3}, w^{\idxi, 4}, w^{\idxi, 5})$, collision safe distance $\dsafe > 0$ beyond which the collision cost is 0, and collision sensitivity $\colsensit > 0$: 
\begin{align}
\label{eq:merge-cost}
\hspace{-0.7em}\costfn{\idxi{}}(\allDecVars{}) 
&= 
\sum_{t=1}^{\horizon{}} \Bigg(
w^{\idxi, 1}\|\ctrl{\idxi{}}{t}\|_2^2
+ w^{\idxi, 2} \big(p^{\idxi{}}_{y,t}\big)^{2}
+ w^{\idxi, 3} \big(\psi^{\idxi{}}_t\big)^2 
+ w^{\idxi, 4}\big(v^{\idxi{}}_t - \cmt{\constvelocity{\idxi{}}}\big)^{2} \\ 
&+ w^{\idxi, 5} \sum_{\idxj=1}^{\numrobots{}-1}
  \sum_{\idxk=\idxj+1}^{\numrobots{}} 
  \left[\frac{1}{\colsensit}\log\left(1+\exp\big(\colsensit\, (\dsafe^2 - \big\|\bm{p}^{\idxj}_{t} - \bm{p}^{\idxk}_{t}\big\|_2^2 )\big)\right)\right]^2 \Bigg) \notag{}
\end{align}
The merging vehicles start off the merging lane and are incentivized to track the center of the round entry ramp
and merge safely onto the highway, as shown in \cref{fig:merging-experiments}. 
We enforce curve-following by altering \cref{eq:merge-cost} for vehicle 3 to include a component term, $(\|\bm{p}^{\idxi}_t\|_2^2 - R^2)^2$, and we alter vehicle 1's cost to include a component term $(p^{1}_{x,t} - p^{3}_{x,t} - \mathrm{d}_{13})^2$, which incentivizes vehicle 1 to stay $\mathrm{d}_{13}$ meters ahead of vehicle 3.
In \cref{ssec:experiments-merging-scenario},
we experiment with different hierarchical structures (e.g., Nash, two mixed hierarchies, and Stackelberg chain) and report their impact on vehicles' decisions.

\section{Solving Mixed-Hierarchy Games}
\label{sec:method}



In order to build an algorithm to solve mixed-hierarchy games (i.e., find a set of robot strategies that satisfy \Cref{def:mixed-hierarchy-equilibrium}), we must first develop a
procedure for constructing appropriate Karush-Kuhn-Tucker (KKT) conditions for first-order optimality.
As we will show, these conditions involve high-order derivatives of follower robots' policies; to circumvent the associated computational challenges, we describe a \emph{quasi-policy} approximation, which embeds local first-order approximations of followers' policies within each leader's optimization problem in order to reduce the recursive complexity of evaluating high-order derivatives.
Finally, we propose an inexact Newton's method to iteratively solve the approximate KKT system and prove local exponential convergence.

\subsection{Backward Induction for Deriving KKT Conditions}

We begin by constructing KKT conditions for robots without followers (leaves in the graph $\graph{}$) and then proceed in reverse topological order.

\subsubsection{Constructing KKT Conditions for Leaves.}
Equilibrium conditions for robots without followers do not include the follower constraints \cref{eq:mixed-hierarchy-equilibrium-condition:optimal-subfollower-response}.
Thus, a leaf robot $\idxi{}$'s optimization problem becomes
\begin{equation}
\label{eq:leaf-player-problem}
\costfn{\idxi{}}\left( \allDecVars{}^* \right) = \min_{\decVar{\idxi{}}}
\costfn{\idxi{}}\!\left(
\decVar{\allleaderset{\idxi{}}*}, 
\decVar{\idxi{}}, 
\decVar{\nashset{\idxi{}}*}
\right)
\quad \text{s.t.}~\equalityconstraint{\idxi{}}(\decVar{\idxi{}}) = 0.
\end{equation}
We associate the equality constraint $\equalityconstraint{\idxi{}}(\decVar{\idxi{}}) = 0$ with a Lagrange multiplier $\constraintDual{\idxi{}} \in \mathbb{R}^{\numequalitylagrange{\idxi{}}}$, and introduce the corresponding Lagrangian:
\begin{equation}
\label{eq:leaf-lagrangian}
\lagrangian{\idxi{}}(\allDecVars{}, \constraintDual{\idxi{}}) = \costfn{\idxi{}}(\allDecVars{}) - \constraintDual{\idxi{}}{}^\T \equalityconstraint{\idxi{}}(\decVar{\idxi{}}) . 
\end{equation}
We then compactly formulate robot $\idxi$'s KKT conditions $\kktConds{\idxi}$ by concatenating its Lagrangian stationarity condition with the original equality constraint, 
\begin{equation}
\label{eq:leaf-kkt-conditions}
\kktConds{\idxi{}}(\allDecVars{}, \constraintDual{\idxi{}})
=\begin{bmatrix} \nabla_{\decVar{\idxi{}}}\lagrangian{\idxi{}}(\allDecVars{}, \constraintDual{\idxi{}})\\
\equalityconstraint{\idxi{}}(\decVar{\idxi{}})
\end{bmatrix}=0.
\end{equation}


\subsubsection{From Leaf Robots' KKT Conditions to Their Policies.}
Consider a robot $\idxj{}$ with (direct or indirect) leader robot $\idxi{} \in \allleaderset{\idxj{}}$, but no followers.
Robot $\idxj{}$'s KKT conditions $\kktConds{\idxj{}}$ implicitly specify its \textit{policy} $\policyfn{}{\idxj{}}$, which maps leaders' strategies $\decVar{\allleaderset{\idxj{}}}$ to robot $\idxj{}$'s best response $\decVar{\idxj{}}$, i.e. so that $\decVar{\idxj{}} = \policyfn{}{\idxj{}}(\decVar{\allleaderset{\idxj{}}})$.
For notational convenience, we 
denote the primal and dual variables making up the leaf robot's decision variables by $\outvar{\idxj} = (\decVar{\idxj{}\T}, \constraintDual{\idxj{}\T})^\T$, and 
compactly represent all variables relevant to leaf robot $\idxj$'s optimization problem by 
$\totalvar{\idxj} = (\decVar{\allleaderset{\idxj}\T}, \outvar{\idxj\T})^\T$.

Before constructing the leader $\idxi$'s KKT conditions $\kktConds{\idxi}$, we must first understand how follower $\idxj$'s best response strategy $\decVar{\idxj}$ locally changes in response to small changes in the leader's decision.
To this end, we express the first-order Taylor approximation of robot $\idxj$'s KKT conditions as follows:
\begin{equation}
    \label{eq:const-rank-thm}
    0 = \kktConds{\idxj}(\totalvar{\idxj}) + \nabla_{\outvar{\idxj}} \kktConds{\idxj}(\totalvar{\idxj}) \cdot \Delta \outvar{\idxj} + \nabla_{\decVar{\allleaderset{\idxj}}} \kktConds{\idxj}(\totalvar{\idxj}) \cdot \Delta \decVar{\allleaderset{\idxj}},
\end{equation}
where $\Delta \decVar{\allleaderset{\idxj}}$ is a small perturbation in the leaders' strategies and $\Delta \outvar{\idxj}$ is the corresponding change in $\idxj$'s best response (including both primal and dual variables).
We note that $\nabla_{\outvar{\idxj}} \kktConds{\idxj}(\totalvar{\idxj})$ is a square matrix; if it has full rank, we can apply the implicit function theorem \cite{krantz2002implicit} and obtain that, locally:
\begin{equation}
    \label{eq:best-response-policy}
    \Delta\outvar{\idxj}=\allpolicies{\idxj}( \Delta \decVar{\allleaderset{\idxj}} ) =
    - \big(\nabla_{\outvar{\idxj}} \kktConds{\idxj}(\totalvar{\idxj})\big)^{-1}(\kktConds{\idxj}(\totalvar{\idxj}) + \nabla_{\decVar{\allleaderset{\idxj}}} \kktConds{\idxj}(\totalvar{\idxj}) \cdot \Delta \decVar{\allleaderset{\idxj}}).
\end{equation} 
From \cref{eq:best-response-policy}, we extract the policy $\decVar{\idxj}=\policyfn{}{\idxj}(\decVar{\allleaderset{\idxj}}) $ for robot $\idxj$'s strategy by selecting the corresponding rows of $\allpolicies{\idxj}( \Delta \decVar{\allleaderset{\idxj}} )$.
This explicit construction can then replace the implicit constraint \cref{eq:mixed-hierarchy-equilibrium-condition:optimal-subfollower-response} in the optimization problem of each leader $\idxi{} \in \allleaderset{\idxj}$, which will shortly allow us to build an explicit expression for the leader's approximate KKT conditions and ultimately an efficient solver.

\subsubsection{Constructing KKT Conditions for Non-Leaf Robots.}
Proceeding in reverse topological order on graph $\graph$ from a leaf robot, we consider a robot $\idxi$ whose followers $\allfollowerset{\idxi}$ are all leaves.
Let robot $\idxj \in \allfollowerset{\idxi}$ and denote its local policy by $\policyfn{}{\idxj}$, per the construction in \cref{eq:leaf-kkt-conditions}.
As discussed above, we replace the implicit constraints \cref{eq:mixed-hierarchy-equilibrium-condition:optimal-subfollower-response} on  followers' best responses in robot $\idxi$'s problem  \cref{eq:mixed-hierarchy-equilibrium-condition} with an explicit approximation constructed from \cref{eq:best-response-policy}.

To this end, we introduce dual variable $\constraintDual{\idxi} \in \mathbb{R}^{\numequalitylagrange{\idxi}}$ associated with the equality constraint \cref{eq:mixed-hierarchy-equilibrium-condition:eq-constraint} and $\policyDual{\idxi}{\idxj} \in \mathbb{R}^{\numpolicyduals{\idxj}}$ corresponding to the best-response constraint \cref{eq:mixed-hierarchy-equilibrium-condition:optimal-subfollower-response}, for each $\idxj\in\allfollowerset{\idxi}$.
We compactly represent all policy duals associated with robot $\idxi$ as $\allFollowerPolicyDuals{\idxi} = ( \policyDual{\idxi{}}{\idxj{}\T} ~|~ \idxj{} \in \allfollowerset{\idxi{}} )^\T$\!\!, and collect all decision variables for robot $\idxi$ as
$\outvar{\idxi} \!=\! (\decVar{\idxi{}\T} \!, \decVar{\allfollowerset{\idxi{}}\T}\!, \constraintDual{\idxi{}\T}\!, \allFollowerPolicyDuals{\idxi{}\T})^\T$\!\!.
We also collect all variables relevant to robot $\idxi$'s optimization problem as $\totalvar{\idxi} \!=\! (\decVar{\allleaderset{\idxi{}}\T}\!, \outvar{\idxi{}\T})^\T$\!\!.
We then write robot $\idxi$'s Lagrangian
\begin{equation}
\label{eq:lagrangian}
\lagrangian{\idxi{}}(\totalvar{\idxi}) = \costfn{\idxi{}}(\allDecVars{}) - \constraintDual{\idxi{}}{}^\T \equalityconstraint{\idxi{}}(\decVar{\idxi{}}) - \!\!\sum_{\idxj{} \in \allfollowerset{\idxi{}}} \policyDual{\idxi{}}{\idxj{}}{}^\T (\decVar{\idxj{}} - \policyfn{\idxi}{\idxj{}}(\decVar{\allleaderset{\idxj{}}})). 
\end{equation}
Finally, we define robot $\idxi$'s KKT conditions by concatenating Lagrangian stationarity conditions for robot $\idxi{}$ and its followers, its followers' best-response constraints, and its own equality constraints,
\begin{equation}
\label{eq:kkt-conditions}
\kktConds{\idxi{}}(\totalvar{\idxi{}})
=\begin{bmatrix}
    \nabla_{\decVar{\idxi{}}}\lagrangian{\idxi{}}(\totalvar{\idxi{}}) \\
    \begin{bmatrix} \nabla_{\decVar{\idxj{}}}\lagrangian{\idxi{}}(\totalvar{\idxi{}}) \\ 
    \decVar{\idxj} - \policyfn{}{\idxj}(\decVar{\allleaderset{\idxj}}) \end{bmatrix}_{\idxj{}\in\allfollowerset{\idxi{}}} \\
    \equalityconstraint{\idxi{}}(\decVar{\idxi{}})
\end{bmatrix} 
= 0.
\end{equation}
Evaluating a direct follower $\idxj$'s stationarity condition 
in \cref{eq:kkt-conditions} requires computing $\nabla_{\decVar{\idxj{}}}\lagrangian{\idxi{}}(\totalvar{\idxi{}}) $, which depends on the policy gradient $\nabla_{\decVar{\idxj{}}} \policyfn{}{\idxj{}} $:  
\begin{equation}
    \label{eq:quasi-policy-approximation}
    \nabla_{\decVar{\idxj{}}}\lagrangian{\idxi{}}(\totalvar{\idxi{}}) = \nabla_{\decVar{\idxj{}}}\costfn{\idxi{}}(\allDecVars{}) 
    - \!\!\sum_{\idxj{} \in \allfollowerset{\idxi{}}}  (I - \nabla_{\decVar{\idxj}} \policyfn{\idxi}{\idxj{}}(\decVar{\allleaderset{\idxj{}}}))^\T\policyDual{\idxi{}}{\idxj{}}{} .
\end{equation}
\cmt{By the same mechanism, a depth-3 chain (robot $\idxi$ leading $\idxj$ leading $\idxk$) introduces the second-order policy derivative $\nabla^2_{\decVar{\idxk}} \policyfn{}{\idxk}$ into  robot $\idxi$'s KKT system via differentiation of $\policyfn{}{\idxj}$ (which itself depends on $\nabla_{\decVar{\idxj}} \policyfn{}{\idxj}$).
The order of policy gradients increases with hierarchy depth, introducing tractability challenges.}
To make such computation tractable, we take inspiration from \cite{laine2023computation,li2024computation} and propose to replace leaders' exact optimality conditions with an approximate set of KKT conditions where we locally keep the first-order policy gradient (for leader-follower relationships) and zero out higher-order policy gradients wherever they appear in robot $\idxi{}$'s KKT conditions. 
This \emph{quasi-policy} approximation makes it straightforward to recursively continue the aforementioned procedure in reverse topological order.
Once each of robot $\idxi{}$'s followers' quasi-policy approximations is computed, it can be embedded in robot $\idxi$'s problem, even if the followers are not leaves.

\subsubsection{Encoding Nash Relationships}
Every robot's optimal choice of strategy depends on its leaders', followers', and Nash relatives' strategies.
While the KKT construction process described above encodes leader-follower relationships, Nash relationships are captured by solving KKT conditions jointly across sets of multiple robots, i.e. by concatenating KKT conditions of robots with the same leaders.

\subsubsection{Constructing Joint KKT Conditions.}
Ultimately, the full set of KKT conditions for an equality-constrained mixed-hierarchy equilibrium 
has size $\numtotalvars{}$ (the number of variables match the number of nonlinear equations) and is given by
\begin{equation}
\vspace{-0.5em}
\label{eq:full-kkt-conditions}
\kktConds{}(\totalvar{})
=
\Big[ \kktConds{1}(\totalvar{1})^\T, ~~ \kktConds{2}(\totalvar{2})^\T, ~ \cdots, ~ \kktConds{\numplayers{}}(\totalvar{\numplayers{}})^\T\Big]^\T = 0.
\vspace{-0.5em}
\end{equation}


\subsection{Inexact Newton's Method for Solving Mixed Hierarchy Games}











\begin{algorithm}[!b]
\LinesNumbered
\caption{General Solver for \MethodName{}s}
\label{alg:nonquadratic-solver}
\DontPrintSemicolon
\KwIn{Hierarchy structure $\graph{}$, initial guess $\totalvariter{}{0}$, tolerance $\threshold{} > 0$, maximum iterations $\maxiters{} > 0$.}
\KwOut{Approximate solution $\totalvar{}{}^{\ast}$ satisfying first-order optimality conditions.}

\For{$\iter \gets 1$ \KwTo $\maxiters{}$}{
    \hspace{-2em}\algcmt{Construct approx. KKT conditions in reverse topological order.}
    \For(){each robot $\idxi{}$ in reverse topological order of $\graph$}{
        Construct robot $\idxi$'s KKT conditions $\kktConds{\idxi}$ per Eqs. \cref{eq:leaf-kkt-conditions,eq:kkt-conditions,eq:leaf-lagrangian,eq:lagrangian,eq:const-rank-thm,eq:best-response-policy}. \label{alg:nonquadratic-solver:construct-kkt-i}
    }
    
    Assemble and evaluate $\kktConds{}(\totalvariter{}{\iter})$ \cref{eq:full-kkt-conditions}, and solve for inexact Newton step $\Delta\totalvariter{}{\iter}$. \label{alg:nonquadratic-solver:newton-step}
    
    Select step size $\alpha_{\iter}$ along direction $\Delta \totalvariter{}{\iter}$ and generate $\totalvariter{}{\iter+1}$ per \cref{eq:newton-update}\; \label{alg:nonquadratic-solver:update}
    \If(){$\| \kktConds{}(\totalvariter{}{\iter+1}) \|_2^2 \leq \threshold{}$\label{alg:nonquadratic-solver:convergence}}{
        \Return $\totalvariter{}{\iter+1}$\;
    }
}

\Return $\totalvariter{}{\maxiters{}}$\;
\end{algorithm}

We propose to solve the joint KKT system \cref{eq:full-kkt-conditions} via an inexact Newton's method (\Cref{alg:nonquadratic-solver}). 
At iteration $\iter$, we first construct the KKT system $\kktConds{}(\totalvariter{}{\iter})$ under the quasi-policy approximation.
Then, for computational tractability, we compute an inexact Newton step for $\kktConds{}(\totalvariter{}{\iter})$ using an approximate Jacobian (denoted $\nabla_{\totalvar{}}\kktConds{}(\totalvariter{}{\iter})$), which we construct under the quasi-policy approximation by setting the high-order policy gradient terms to zero as discussed earlier.
Accordingly, we form an approximate first-order Taylor expansion of the KKT conditions about the current iterate $\totalvariter{}{\iter}$ (\cref{alg:nonquadratic-solver:newton-step}), resulting in the inexact Newton system 
\begin{equation}
\label{eq:newton-system}
\nabla_{\totalvar{}}\kktConds{}(\totalvariter{}{\iter})\, \Delta\totalvariter{}{\iter}
=
-\kktConds{}(\totalvariter{}{\iter}).
\end{equation}
Solving \cref{eq:newton-system} produces an inexact Newton direction $\Delta\totalvariter{}{\iter}$, which can be used to update the current iterate $\totalvariter{}{\iter}$ (\cref{alg:nonquadratic-solver:update}) as follows:
\begin{equation}
\label{eq:newton-update}
\totalvariter{}{\iter+1}
=
\totalvariter{}{\iter}
-
\stepsize{\iter} \Delta\totalvariter{}{\iter},
\end{equation}
\noindent with step size $\stepsize{\iter}\in(0,1]$ selected (\cref{alg:nonquadratic-solver:update}) using a standard backtracking line search, geometric decrease, etc. (cf. \cite[Ch.~3]{nocedal1999numerical}).
Iterations terminate when $\|\kktConds{}(\totalvariter{}{\iter})\|_2^2 \le \tau$ (a user-specified tolerance for KKT error; \cref{alg:nonquadratic-solver:convergence}), yielding an approximate mixed-hierarchy equilibrium; it is not exact due to the use of the quasi-policy approximation in constructing the KKT conditions.

In practice, we can accelerate mixed-hierarchy KKT computation (\cref{alg:nonquadratic-solver:construct-kkt-i}) under the quasi-policy approximation by caching reusable symbolic components (generated via a symbolic algebra toolchain \cite{SymbolicTracingUtils2025}), exploiting sparsity induced by the hierarchy graph, etc.
\cmt{Next, we comment on the quality of the approximation (\cref{rmk:approximation-quality}) and present a local convergence analysis of our method (\Cref{thm:convergence-analysis}).}


\begin{remark}[Quality of Quasi-Policy Approximations]
\label{rmk:approximation-quality}
\cmt{Provided quadratic objectives and linear constraints, the optimal policy is linear \cite{laine2024mathematical}, and the quasi-policy approximation is exact.}
In this case, our method converges in a single iteration \cmt{for} step size $\alpha = 1$. For games with nonquadratic objectives and nonlinear constraints, the quasi-policy provides a good approximation when higher-order policy gradients of the optimal policy are small relative to first-order gradients,
as suggested by analyses of both feedback Nash \cite{laine2023computation} and Stackelberg games \cite{li2024computation}.  
\cmt{On an equality-constrained Stackelberg-chain example problem from \cite{li2024computation}, our solver reproduces the equilibrium of that paper's specialized solver.}
\end{remark}

\begin{theorem}
\label{thm:convergence-analysis}
Let $\totalvar{}_{0}$ be the initialization of \cref{alg:nonquadratic-solver} and define the sub-level set $\mathcal{F}
:=
\{\totalvar{}:\|\approxkktConds{}(\totalvar{})\|_2
\le
\|\approxkktConds{}(\totalvar{}_{0})\|_2\}.$ Denote by $\nabla \approxkktConds{}(\totalvar{})$ the approximate Jacobian under the quasi-policy assumption, i.e., setting the high-order policy gradient terms to zeros.  
Suppose that $\mathcal{F}$ is closed, that $\nabla \approxkktConds{}(\totalvar{})$ is invertible for all $\totalvar{}\in\mathcal{F}$, and that there exist constants $D,C>0$ such that
$\|(\nabla \approxkktConds{}(\totalvar{}))^{-1}\|_2 \le D$, $
\|\nabla^* \approxkktConds{}(\totalvar{})-\nabla^* \approxkktConds{}(\tilde{\totalvar{}})\|_2
\le
C\|\totalvar{}-\tilde{\totalvar{}}\|_2,\ 
\forall \totalvar{},\tilde{\totalvar{}}\in\mathcal{F}$. 
Denote by $\nabla^* \approxkktConds{}(\totalvar{})$ the exact Jacobian of $\approxkktConds{}$, and assume that there exists $\delta>0$ such that $
\|\nabla^* \approxkktConds{}(\totalvar{})-\nabla \approxkktConds{}(\totalvar{})\|_2\le\delta$, $
\forall \totalvar{}\in\mathcal{F}$, and $
D\cdot \delta<1$. 
Let $
\Delta \totalvar{}
=
-(\nabla \approxkktConds{}(\totalvar{}))^{-1}
\approxkktConds{}(\totalvar{})$ 
be the inexact Newton step in Alg.~\ref{alg:nonquadratic-solver}. Then, for the $\iter{}$-th iteration solution $\totalvar{}_{k}\in\mathcal{F}$, there exists $\alpha\in[0,1]$ such that
\begin{enumerate}
\item
if $\|\approxkktConds{}(\totalvar{}_k)\|_2>\frac{1-D\delta}{D^2C}$, then $
\|\approxkktConds{}(\totalvar{}_{k}+\alpha\Delta\totalvar{}_k)\|_2
\le
\|\approxkktConds{}(\totalvar{}_k)\|_2-\frac{1-D\delta}{2D^2C}$;
\item
if $\|\approxkktConds{}(\totalvar{}_k)\|_2\le\frac{1-D\delta}{D^2C}$, then $
\|\approxkktConds{}(\totalvar{}_k+\alpha\Delta\totalvar{}_k)\|_2
\le
\frac{1}{2}(1+D\delta)\|\approxkktConds{}(\totalvar{}_k)\|_2$. 
\end{enumerate}
\end{theorem}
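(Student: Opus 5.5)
The plan follows the classical global analysis of damped Newton's method on the residual norm (cf. \cite{nocedal1999numerical}). The two new ingredients are the mismatch $\delta$ between the approximate Jacobian $\nabla \approxkktConds{}$ and the exact Jacobian $\nabla^* \approxkktConds{}$, and the bound $D$ on the inverse of $\nabla \approxkktConds{}$.

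\emph{Bound on the step.} Fix an iterate $\totalvar{}_k\in\mathcal{F}$ and write $\approxkktConds{}_k=\approxkktConds{}(\totalvar{}_k)$. From $\Delta\totalvar{}_k=-(\nabla\approxkktConds{}(\totalvar{}_k))^{-1}\approxkktConds{}_k$ and the hypothesis on the inverse, $\|\Delta\totalvar{}_k\|_2\le D\|\approxkktConds{}_k\|_2$.

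\emph{One-step expansion.} By the fundamental theorem of calculus along the segment from $\totalvar{}_k$ to $\totalvar{}_k+\alpha\Delta\totalvar{}_k$, together with the $C$-Lipschitz continuity of $\nabla^*\approxkktConds{}$,
\[
\approxkktConds{}(\totalvar{}_k+\alpha\Delta\totalvar{}_k)=\approxkktConds{}_k+\alpha\nabla^*\approxkktConds{}(\totalvar{}_k)\Delta\totalvar{}_k+R,\qquad \|R\|_2\le\tfrac{C}{2}\alpha^2\|\Delta\totalvar{}_k\|_2^2 .
\]
For this step the segment has to stay where the Lipschitz bound is available. I will assume this, as is implicit in the hypotheses; otherwise one shrinks $\alpha$ and uses closedness of $\mathcal{F}$.

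\emph{Linear term.} Split $\nabla^*\approxkktConds{}=\nabla\approxkktConds{}+(\nabla^*\approxkktConds{}-\nabla\approxkktConds{})$. Since $\nabla\approxkktConds{}(\totalvar{}_k)\Delta\totalvar{}_k=-\approxkktConds{}_k$ and the mismatch has norm at most $\delta$, the linear term equals $-\alpha\approxkktConds{}_k+\alpha E$ with $\|E\|_2\le\delta D\|\approxkktConds{}_k\|_2$.

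\emph{Key inequality.} Combining the three pieces gives
\[
\|\approxkktConds{}(\totalvar{}_k+\alpha\Delta\totalvar{}_k)\|_2\le\big(1-\alpha(1-D\delta)\big)\|\approxkktConds{}_k\|_2+\tfrac{CD^2}{2}\alpha^2\|\approxkktConds{}_k\|_2^2 .
\]
Both cases of Theorem~\ref{thm:convergence-analysis} are then obtained by choosing $\alpha$ in this one-dimensional bound.

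\emph{Case 2.} If $\|\approxkktConds{}_k\|_2\le\frac{1-D\delta}{D^2C}$, take $\alpha=1$. Then $\frac{CD^2}{2}\|\approxkktConds{}_k\|_2\le\frac{1-D\delta}{2}$, so the right-hand side is at most
\[
\Big(D\delta+\tfrac{1-D\delta}{2}\Big)\|\approxkktConds{}_k\|_2=\tfrac12(1+D\delta)\|\approxkktConds{}_k\|_2 ,
\]
which is exactly the stated bound. Because $D\delta<1$, this is a strict contraction.

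\emph{Case 1, and the main obstacle.} If $\|\approxkktConds{}_k\|_2>\frac{1-D\delta}{D^2C}$, the natural choice is $\alpha=\frac{1-D\delta}{D^2C\|\approxkktConds{}_k\|_2}\in(0,1)$. This minimizes the quadratic upper bound, but it only yields the decrease $\frac{(1-D\delta)^2}{2D^2C}$. Since $0<1-D\delta\le1$, that constant is smaller than the required $\frac{1-D\delta}{2D^2C}$ by a factor of $1-D\delta$. Closing this gap is the hard step, and I expect the crude inequality above cannot do it alone: its minimum over $\alpha$ is exactly the squared constant.

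What I would try first is to avoid charging the Jacobian mismatch $E$ at the full rate $\delta D\|\approxkktConds{}_k\|_2$. Instead, I would bound $\langle\approxkktConds{}_k,E\rangle$ directly using the structure of the quasi-policy error, namely that it lives only in the follower-stationarity blocks of \cref{eq:kkt-conditions}. The aim is to show that the first-order decrease coefficient is effectively $1$ rather than $1-D\delta$, with $D\delta$ entering only as a second-order correction. With that improved linear estimate, the step size $\alpha=\frac{1}{D^2C\|\approxkktConds{}_k\|_2}$ gives a decrease of at least $\frac{1}{2D^2C}$, which dominates $\frac{1-D\delta}{2D^2C}$ and so proves the claim.

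If the structural argument fails, the fallback is to state case~1 with the squared constant $\frac{(1-D\delta)^2}{2D^2C}$. It still gives a uniform decrease per iteration and therefore finite-time entry into the case-2 regime. However, that would be a weaker result than the theorem as worded.
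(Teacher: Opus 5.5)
Your key inequality and your Case~2 are exactly the paper's argument. Your Case~1 step size $\alpha^*=\frac{1-D\delta}{D^2C\|\pi(s_k)\|_2}$ is also the paper's choice. More importantly, the paper's own proof ends Case~1 with $\|\pi(s_k+\alpha^*\Delta s_k)\|_2\le\|\pi(s_k)\|_2-\frac{(1-\delta D)^2}{2D^2C}$, which is precisely your fallback constant. The constant $\frac{1-D\delta}{2D^2C}$ in the theorem statement is not what the paper proves, so your diagnosis of the mismatch is right.

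On the segment issue, the paper asserts that the quadratic bound itself keeps $s_k+\alpha\Delta s_k$ in $\mathcal{F}$ for $\alpha\in[0,\alpha^*]$. The rigorous version is a continuation argument using closedness of $\mathcal{F}$, continuity of $\pi$, and the fact that the bound is strictly below $\|\pi(s_k)\|_2$ for $0<\alpha<2\alpha^*$. ``Shrinking $\alpha$'' is not a substitute, because a smaller step does not deliver the claimed decrease.

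The structural repair you propose cannot work, because Case~1 as worded is false under the theorem's hypotheses, which constrain $\pi$ only through $D$, $C$, $\delta$. Take the scalar map $\pi(s)=0.1\,s$ with approximate Jacobian $\nabla\pi\equiv 1$, so $\nabla^*\pi\equiv 0.1$. Set $s_0=s_k=2$, $D=C=1$ and $\delta=0.9$. Every hypothesis holds:
\begin{itemize}
\item $\mathcal{F}=[-2,2]$ is closed;
\item $\|(\nabla\pi)^{-1}\|_2=1=D$;
\item $\nabla^*\pi$ is constant, hence $C$-Lipschitz;
\item the Jacobian mismatch is $0.9=\delta$, and $D\delta=0.9<1$.
\end{itemize}
Since $\|\pi(s_k)\|_2=0.2>0.1=\frac{1-D\delta}{D^2C}$, Case~1 applies and promises some $\alpha\in[0,1]$ with $\|\pi(s_k+\alpha\Delta s_k)\|_2\le 0.15$. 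But $\Delta s_k=-0.2$ gives $\pi(s_k+\alpha\Delta s_k)=0.2-0.02\alpha\ge 0.18$ for every $\alpha\in[0,1]$. The paper's squared constant, $0.005$, is respected.

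In this example your error term $E=(\nabla^*\pi-\nabla\pi)\Delta s_k=0.18$ is exactly aligned with the residual and attains the full rate $\delta D\|\pi(s_k)\|_2$. So no argument can improve the linear coefficient without an extra hypothesis the theorem does not contain, for instance a bound on $\langle \pi,(\nabla^*\pi-\nabla\pi)(\nabla\pi)^{-1}\pi\rangle$. The $\delta$-free decrease $\frac{1}{2D^2C}=0.5$ you aim for is far beyond the best achievable $0.02$. The same linear construction with $\nabla\pi\equiv 1/D$ and $\nabla^*\pi\equiv 1/D-\delta$ violates Case~1 whenever $D\delta>\tfrac12$.

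So drop the structural route and state Case~1 with $\frac{(1-D\delta)^2}{2D^2C}$, which is what both your argument and the paper's actually establish. This is a flaw in the theorem statement, not in your reasoning. The corrected statement still gives a uniform decrease per step, hence finitely many steps before Case~2, and geometric decrease at rate $\frac12(1+D\delta)$ thereafter.
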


\begin{proof}
    By the fundamental theorem of calculus, we have \vspace{-0.5em}
    \begin{equation}
    \begin{aligned}
        &\|\approxkktConds{}(\totalvar{}+ \alpha \Delta \totalvar{})\|_2=\left\| \approxkktConds{}(\totalvar{}) + \int_0^1 \nabla^* \approxkktConds{}(\totalvar{} + \tau \alpha \Delta \totalvar{} )\alpha \Delta \totalvar{} d\tau  \right\|_2\\[-3pt]
        &\le \| \approxkktConds{}(\totalvar{}) + \alpha \nabla^*\approxkktConds{}( \totalvar{}) \Delta \totalvar{}\|_2 + \left\| \int_0^1 (\nabla^*\approxkktConds{}(\totalvar{} + \tau \alpha \Delta \totalvar{}) - \nabla^* \approxkktConds{}(\totalvar{}))\alpha \Delta \totalvar{}d\tau \right\|_2\\[-5pt]
        & \le \| \approxkktConds{}(\totalvar{}) - \alpha \nabla^* \approxkktConds{}(\totalvar{})(\nabla \approxkktConds{}(\totalvar{}))^{-1} \approxkktConds{}(\totalvar{}) \|_2 + \|\alpha \Delta \totalvar{}\|_2  \cdot \int_0^1 C \|\alpha \tau \Delta \totalvar{} \|_2 d\tau \\[-5pt]
        & \le (1-\alpha) \|\approxkktConds{}(\totalvar{})\|_2 + \alpha \delta D \|\approxkktConds{}(\totalvar{})\|_2 + \|\alpha \Delta \totalvar{}\|_2  \cdot \int_0^1 C \|\alpha \tau \Delta \totalvar{} \|_2 d\tau\\[-5pt]
        & \le (1-\alpha(1-\delta D)) \|\approxkktConds{}(\totalvar{})\|_2 + \frac{1}{2} \alpha^2 CD^2 \|\approxkktConds{}(\totalvar{})\|_2^2.
    \end{aligned}\label{eq:basic inequality}
    \end{equation}
    The right-hand side is a quadratic function of $\alpha$ and is minimized \cmt{by}  $\alpha^*=\frac{1-\delta D}{D^2 C \|\approxkktConds{}(\totalvar{}) \|_2}$. This quadratic term also ensures that for all $\alpha\in[0,\alpha^*]$, we have $\totalvar{}+\alpha\Delta\totalvar{} \in \mathcal{F}$, provided $\totalvar{} \in \mathcal{F}$. Suppose that $\|\approxkktConds{}(\totalvar{})\|_2> \frac{1-\delta D}{D^2C}$, then we have $\alpha^*<1$ is a feasible stepsize and therefore $ \| \approxkktConds{}(\totalvar{} + \alpha^* \Delta \totalvar{})\|_2 \le\|\approxkktConds{}(\totalvar{})\|_2 - \frac{(1-\delta D)^2}{2D^2C}$. Moreover, when $\|\approxkktConds{}(\totalvar{})\|_2\le \frac{1-\delta D}{D^2C}$, the right handside of \eqref{eq:basic inequality} is minimized by the maximum allowable step size $\alpha = 1$, so we can strengthen the bound by substituting $ \|\approxkktConds{}(\totalvar{})\|_2\le \frac{1-\delta D}{D^2C}$ to find $\| \approxkktConds{}(\totalvar{}+\alpha \Delta \totalvar{})\|_2\le \frac{1}{2}(1+\delta D) \|\approxkktConds{}(\totalvar{})\|_2$. 
\end{proof}
\begin{remark}[Implications of Theorem~\ref{thm:convergence-analysis}]
Theorem~\ref{thm:convergence-analysis} clarifies structural conditions under which \cref{alg:nonquadratic-solver} is expected to converge locally.
In particular, when robot objectives include strong quadratic control regularization and the dynamics are locally close to linear, higher-order policy sensitivities are small, making the assumptions of Theorem~\ref{thm:convergence-analysis} well justified.
This observation provides practical guidance for modeling objectives and constraints to promote convergence of our solver in mixed-hierarchy games.
\cmt{The constants in Theorem~\ref{thm:convergence-analysis} are regularity conditions on the problem and need not be computed by the user.}
\end{remark}

\paragraph{Computational Complexity.} \cmt{The total runtime complexity is $\mathcal{O}(\maxiters{} \cdot (\numplayers{}^2 + H(\numtotalvars{})))$, where $\numplayers{}^2$ accounts for worst-case (Stackelberg chain) KKT construction and $H(\numtotalvars{})$ for the runtime of the linear system solver for the KKT system.}
\section{Experiments}
\label{sec:experiments}
We demonstrate our method's efficacy in  
\cmt{hardware and simulated experiments}.
\subsection{Target Guarding: A Three-Robot \cmt{Hardware} Experiment}
\label{ssec:hardware-scenario}
We describe a three-robot ($\numplayers{} = 3$) dynamic mixed-hierarchy game between a pursuer (R1), a guard (R2), and a target (R3).
The pursuer (R1) wants to achieve a similar position to the target (R3), but wishes to stay away from the guard (R2).
The guard (R2) wants to keep the pursuer (R1) away from the target (R3), and the target wants to stay close to the guard (R2) while reaching a goal state $\state{}{g} \in \mathbb{R}^{\numstates{3}}$ by the end of the planning horizon.
All robots have a control effort cost, and component terms 
are weighted by $\bm{w}^{\idxi} \in \mathbb{R}^3$.
\begin{subequations}
\begin{align*}
\costfn{1}(\allDecVars{}) &= \sum_{t=1}^{\horizon{}} \bm{w}^1 \cdot \left[\| \state{3}{t} - \state{1}{t} \|_2^2, ~~ -\!\| \state{2}{t} - \state{1}{t} \|_2^2, ~~ \|\ctrl{1}{t}\|_2^2\right]^\T\!\!\!, ~ \bm{w}^1 = [2, 1, 1.25]^\T, \\
\costfn{2}(\allDecVars{}) &= \sum_{t=1}^{\horizon{}} \bm{w}^2 \cdot \left[\| \state{3}{t} - \state{2}{t} \|_2^2, ~~ -\!\| \state{3}{t} - \state{1}{t} \|_2^2, ~~ \|\ctrl{2}{t}\|_2^2\right]^\T\!\!\!, ~ \bm{w}^2 = [0.5, 1, 0.25]^\T, \\
\costfn{3}(\allDecVars{}) &= \sum_{t=1}^{\horizon{}} \bm{w}^3 \cdot \left[\frac{1}{\horizon{}}\| \state{3}{\horizon{}} - \state{}{g} \|_2^2, ~~ \| \state{3}{t} - \state{2}{t} \|_2^2, ~~ \|\ctrl{3}{t}\|_2^2\right]^\T\!\!\!, ~ \bm{w}^3 = [10, 1.25, 0.1]^\T\!\!.
\end{align*}
\end{subequations}
We define hierarchy structure $\graph{} = ([3], \{(2, 1)\})$ to 
model the guard leading the target, while both have a Nash relationship with the pursuer.

\begin{figure}[!t]
\centering
 \includegraphics[width=0.9\columnwidth]{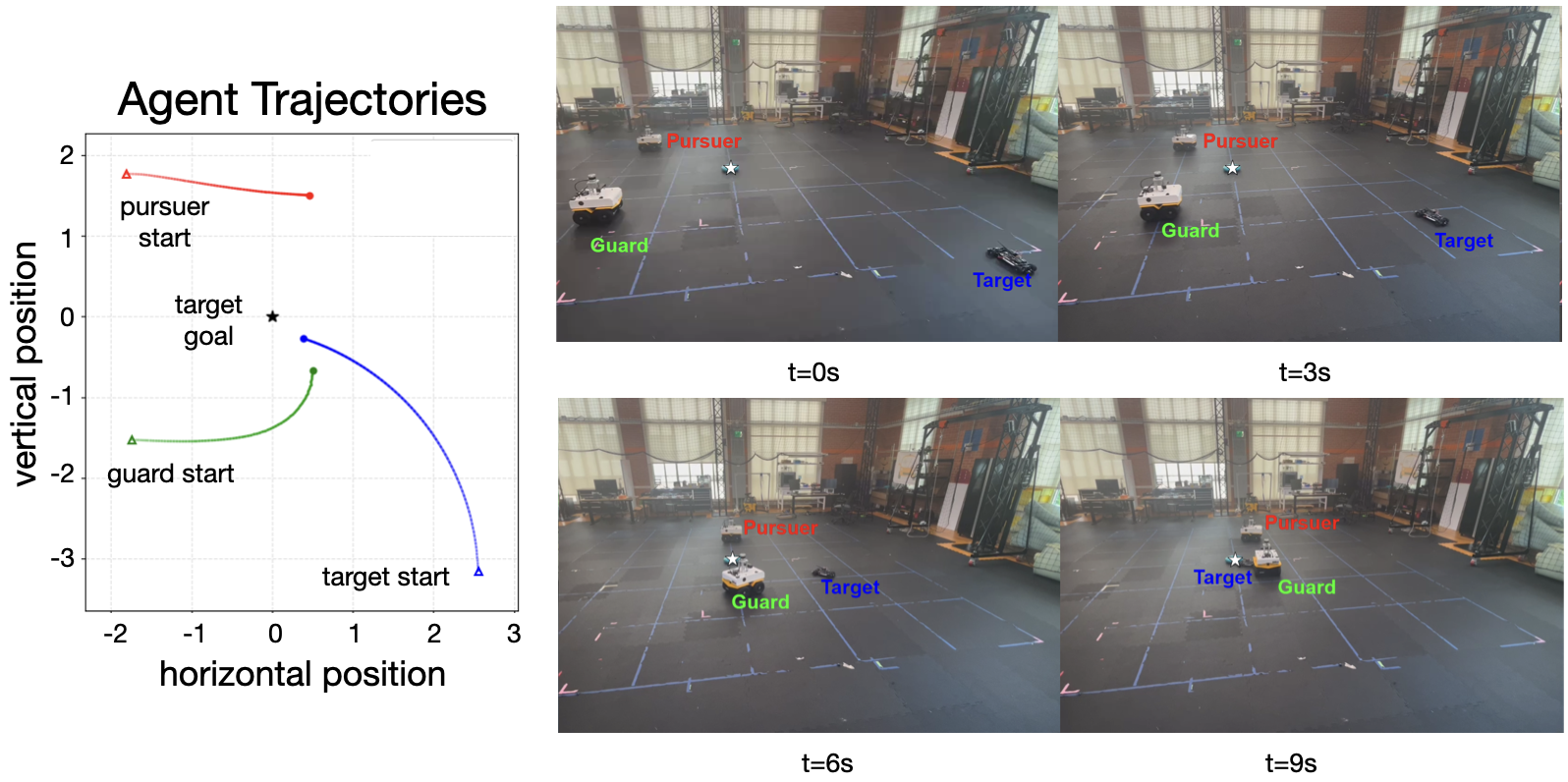}
\vspace{-6pt}
\caption{\label{fig:front-figure}
   We \cmt{play} a target-guarding game in \cmt{hardware} with \cmt{three robots}.
   Robot 2 (Clearpath Jackal, green) acts as the guard, leading Robot 1 (Clearpath Jackal, pursuer, red) and Robot 3 (NVIDIA JetRacer, target, blue), which seeks a goal. All robots execute plans generated by our solver in a receding-horizon fashion.
}
\vspace{-18pt}
\end{figure}




\subsubsection{Experimental Setup}
We evaluate this example in \cmt{hardware}
with three ground robots: two agents \cmt{(Clearpath Jackals)} acting as the pursuer (R1) and guard (R2), and \cmt{an NVIDIA JetRacer as the} target (R3). The interaction is modeled with horizon $\horizon{} = 10$ and sampling period $\dt{} = \SI{0.1}{\second}$, and solved\footnote{CPU version: 13th Gen Intel Core i9-13900HX (2.20 GHz). Memory: 32 GB.}
in a receding-horizon manner using simulator-provided state measurements.
We convert the resulting double-integrator controls to differential drive controls for each robot.

We highlight a number of noteworthy behaviors in \cref{fig:front-figure}.
R3 makes a wide turn to move towards the goal and get closer to R2.
R1 pursues a course to meeting R3, but diverts from it due to the presence of R2.
R2, using knowledge of the target's actions, chooses to intersect with R1 and R3's paths towards the origin, leading R1 to continue its course and R3 diverting its course to avoid R2.
The solver is called 93 times during execution and each call takes \SI{13.6 \pm 2.8}{\milli\second} (mean $\pm$ standard deviation).
This demonstration shows that our solver runs in real-time, making it practical for real-world deployment.


\subsection{Merging into a Convoy: A Four-Vehicle Simulation}
\label{ssec:experiments-merging-scenario}
We run the experiments described in the running example from \cref{sec:running-example} on a game 
with horizon $\horizon{}=10$, and discretization period $\Delta t =$ \SI{0.5}{\second}.
We choose the following hierarchy structures: Nash, \cmt{mixed hierarchy A $(1 \rightarrow 2, 2 \rightarrow 4)$, mixed hierarchy B $(1 \rightarrow 3, 1 \rightarrow 2, 2 \rightarrow 4)$, and Stackelberg chain (1 $\rightarrow$ 3 $\rightarrow$ 2 $\rightarrow$ 4) as depicted in \cref{fig:graph-definitions}}.
The first hierarchy structure defines simultaneous play among all vehicles (Nash).
\cmt{The second structure describes a situation in which the entire convoy is a Stackelberg chain led by vehicle 1, and they play a Nash game with vehicle 3.
In the third structure, vehicle 1 plays first and then its direct followers 2 and 3 play simultaneously in response.
This could simulate a real-world scenario in which the merging vehicle responds to leader vehicle 1's strategy
but has not yet resolved whether to yield to vehicle 2.
The fourth describes turn-based play among vehicles (Stackelberg chain).}



\begin{figure*}[!t]
\centering
\begin{minipage}{0.55\textwidth}
  \includegraphics[width=\linewidth]{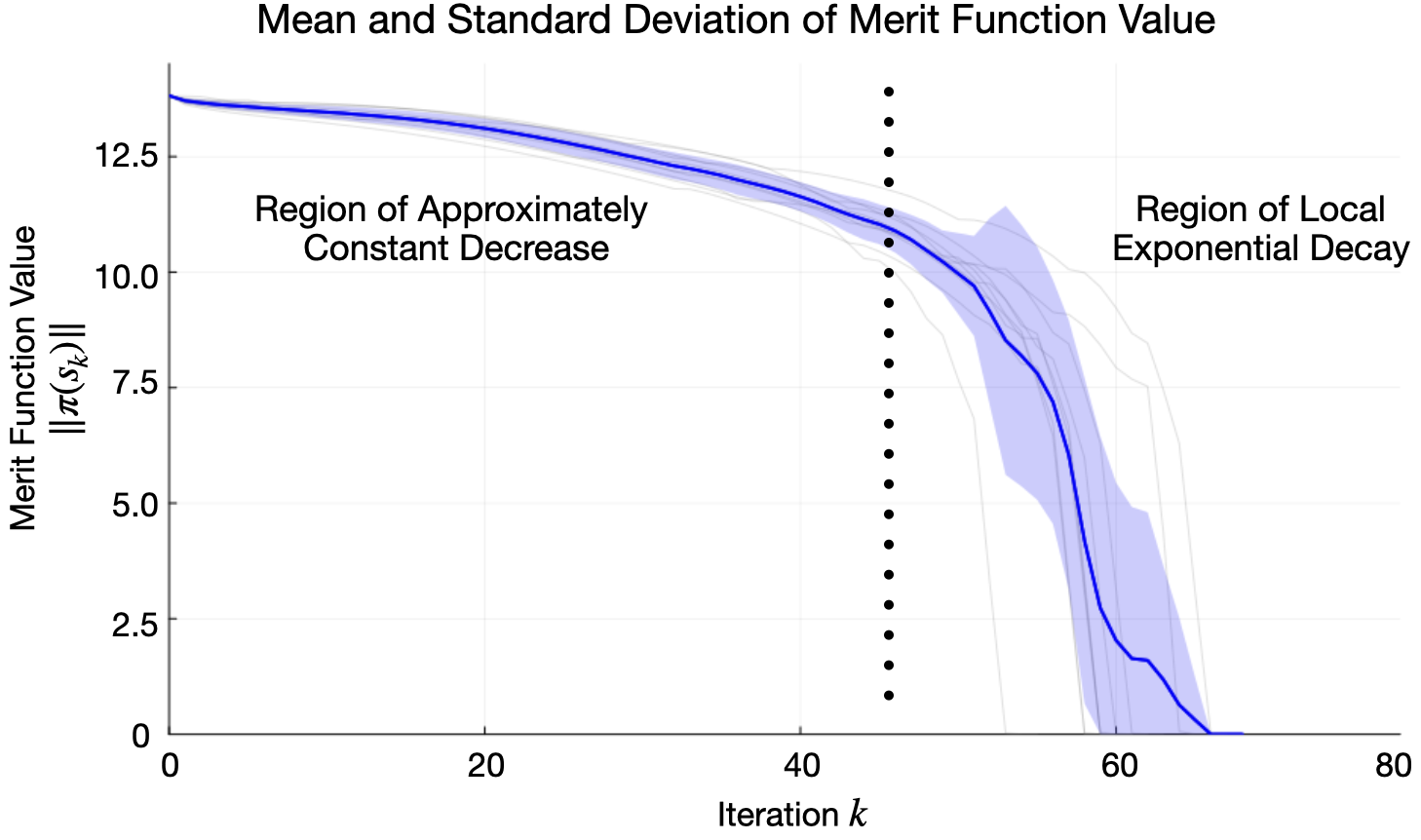}
\end{minipage}\hfill
\begin{minipage}{0.35\textwidth}
  \caption{
  We report the mean and standard deviation of the merit function value $\|\kktConds{}(\totalvariter{}{\iter})\|_2$ over 10 runs with random initial states under a fixed hierarchy structure, corroborating Theorem~\ref{thm:convergence-analysis} via monotone decay and local exponential convergence of $\|\kktConds{}(\totalvariter{}{\iter})\|$.
  }
  \label{fig:convergence}
\end{minipage}
\vspace{-2.5em}
\end{figure*}

Our experimental results are shown in \cref{fig:merging-experiments}.
These scenarios illustrate how hierarchy structure influences merging behavior.
When vehicle 3 is a follower of vehicle 1, vehicle 1 exploits its leadership position by accelerating to remain ahead after the merge.
In contrast, under a Nash relationship between vehicles 1 and 3, vehicle 1 behaves more conservatively and yields, allowing vehicle 3 to merge ahead.
Despite these differing outcomes, all equilibria in our experiments (e.g., Nash, Stackelberg, and mixed hierarchies) result in safe interactions among all vehicles, including vehicles 2 and 4.
Finally, \cref{fig:convergence} provides empirical corroboration of the convergence results in \Cref{thm:convergence-analysis}\cmt{, showing constant decrease further from the equilibrium and exponential convergence near it.}

%

Overall, this convoy example demonstrates that mixed-hierarchy games naturally encode asymmetric information access within an optimization-based framework for decision-making. 
As the hierarchy structure changes, our solver converges to distinct, interpretable behavior for this simulated example with hundreds of variables, nonquadratic objective functions, and nonlinear constraints.
\section{Conclusion}
\label{eq:discussion}


We define mixed-hierarchy games, which contain agents with both Nash and Stackelberg relationships to one another.
We present a formal definition of the relevant equilibrium concept for this type of game, and develop 
a tractable procedure for constructing the KKT conditions of mixed-hierarchy games using quasi-policy approximations to embed followers' best response policies into their leaders' optimization problems. 
We propose an inexact Newton's method for efficiently solving those approximate KKT systems. 
We provide a theoretical proof and an empirical study for the local exponential convergence of our algorithm to an approximate mixed-hierarchy equilibrium.
We evaluate our method on two complex multi-agent scenarios, and showcase its broad functionality and real-time performance using our open-source solver implementation.
Future work will explore the impact of different information structures on robot behavior\cmt{, incorporate inequality constraints via barrier methods, and relax the forest structure assumption (Assumption~\ref{ass:forest-info-structure}) to allow multi-leader mixed-hierarchies.}
\bibliography{lab_references,references}


\end{document}